\documentclass[runningheads]{llncs}
\usepackage{amsmath,amssymb}
\usepackage{hyperref}
\usepackage{cleveref}
\usepackage{microtype}
\usepackage{orcidlink}
\begin{document}

\title{The Privacy Subsidy in Continuous-Time Kyle:\\
Cumulative Welfare under Noise-Perturbed\\
Order-Flow Observation}

\titlerunning{Privacy Subsidy in Continuous-Time Kyle}

\author{Yuki Nakamura\orcidlink{0009-0001-7174-6737}}
\authorrunning{Y.\ Nakamura}
\institute{The Open University of Japan}

\maketitle

\begin{abstract}
We extend the closed-form privacy-subsidy result of
Nakamura~(2026, arXiv:2605.15746) from the single-period Kyle
model to continuous-time. A committed Bayesian automated market
maker observes the aggregate order flow perturbed by an
independent Brownian privacy channel of diffusion intensity
$\sigma_\varepsilon$. Under the Markovian linear equilibrium, the
price-impact coefficient is $\lambda = \sigma_v /
\sqrt{\sigma_u^2 + \sigma_\varepsilon^2}$ -- constant in time --
and the cumulative expected transfer from the protocol's liquidity
pool to traders over $[0,1]$ is
\[
  |\Pi_M| \;=\; \frac{\sigma_v \, \sigma_\varepsilon^2}
                     {\sqrt{\sigma_u^2 + \sigma_\varepsilon^2}}.
\]
We then establish a structural correspondence between this
cumulative privacy subsidy and Loss-Versus-Rebalancing (Milionis et
al.~2022), identifying privacy-noise welfare as the order-flow
observation analog of LVR's price observation gap. The result
completes the continuous-time Kyle leg of the program of
quantifying break-even fees for committed-AMM exchanges under privacy-aggregated information
environments.
\end{abstract}

\keywords{Market microstructure \and Kyle model \and
Continuous-time insider trading \and Privacy \and Loss-versus-rebalancing.}

%-------------------------------------------------------------
\section{Introduction}
\label{sec:intro}

Privacy-preserving exchange mechanisms in DeFi -- shielded
automated market makers, sealed-bid batch auctions, MPC matching
engines -- alter what the pricing mechanism observes about the
underlying order flow. The companion paper~\cite{nakamura2026privacysubsidy}
analyzes a single-period Kyle~\cite{kyle1985} model in which a
\emph{committed} Bayesian market maker observes aggregate order
flow perturbed by independent Gaussian privacy noise, and derives
a closed-form per-trade transfer from the protocol's liquidity
pool to traders -- the \emph{privacy subsidy}. The present paper
extends that analysis to the continuous-time Kyle setting of
Back~\cite{back1992insider}, in which the informed trader chooses
an adaptive trajectory rather than a one-shot trade size.

\paragraph{Contribution.}
We establish three results. First (\Cref{thm:equilibrium}): under
a linear Markovian ansatz and the standard Kyle--Back rationality
condition $\Sigma(1) = 0$, the unique equilibrium under committed
Bayesian-AMM pricing with independent Brownian privacy noise of
intensity $\sigma_\varepsilon$ has constant price impact
$\lambda = \sigma_v / \sqrt{\sigma_u^2 + \sigma_\varepsilon^2}$
and linearly declining posterior variance
$\Sigma(t) = \sigma_v^2(1-t)$. Second (\Cref{thm:subsidy}): the
cumulative privacy subsidy over $[0,1]$ admits the closed form
\(
  |\Pi_M|
  = \sigma_v\,\sigma_\varepsilon^2 / \sqrt{\sigma_u^2 + \sigma_\varepsilon^2}.
\)
This subsidy is twice the single-period analog
of~\cite{nakamura2026privacysubsidy}, consistent with the
general Kyle--Back doubling of welfare quantities from
single-period to continuous-time (see~\Cref{rem:kb-doubling}; the
factor is not a privacy-specific effect). Third
(\Cref{prop:lvr}): the cumulative privacy subsidy is the
order-flow observation analog of the cumulative
Loss-Versus-Rebalancing of Milionis et al.~\cite{milionis2022lvr}
under a structural correspondence between price observation gaps
(LVR) and signal observation gaps (privacy subsidy).

\paragraph{Positioning.}
The closest contemporary work is Danilova~\cite{danilova2010imperfect},
which characterizes existence and structure of equilibria in
continuous-time Kyle when the market maker observes total order
flow as a noisy signal. Danilova's analysis assumes
\emph{competitive} (zero-profit) market makers and contains no
welfare quantity; because her maker observes the order flow
without the additional privacy coarsening we study, no subsidy
arises there. Our framework substitutes the
classical competitive MM with a committed Bayesian-AMM -- a
pricing rule that updates prices using Bayes' rule on the noisy
observation but does not satisfy zero-profit, as is the natural
model for committed-curve DeFi exchanges where the price quote is
specified in smart-contract code. Under this rule the MM's
expected profit is non-zero and negative, and the resulting
closed-form transfer is the privacy subsidy. The two contributions
are complementary: Danilova establishes equilibrium existence
under competitive pricing; we compute the welfare quantity that is absent when the maker
observes the un-coarsened flow (Danilova's competitive setting)
and appears once the pricing signal is the privacy-coarsened flow,
a loss a competitive maker on the same coarse signal cannot avoid.

\begin{sloppypar}
Adjacent strands address different sources of imperfection.
Qiu and Zhou~\cite{qiu2023insider} treat noise on the
\emph{asset-value} signal received by the MM rather than on the
\emph{order-flow} signal. Caldentey--Stacchetti~\cite{caldentey2010}
study insider trading with a random exponential trading deadline
rather than a noisy observation channel.
\c{C}etin--Danilova~\cite{cetin2017financial} develop the
forward--backward system underlying Markovian
asymmetric-information equilibria. None of these works quantify
welfare under a committed pricing rule or bridge to the LVR
literature.
\end{sloppypar}

%-------------------------------------------------------------
\section{Model}
\label{sec:model}

\subsection{Primitives}
\label{sec:model-primitives}

Fix a trading horizon $[0,1]$ and a filtered probability space
supporting two independent standard Brownian motions $W^u, W^\varepsilon$
and a normal random variable $v \sim \mathcal{N}(p_0, \Sigma_0)$
independent of both. Write $\sigma_v^2 := \Sigma_0$.

A single \emph{informed trader} observes $v$ at time $0$ and
chooses a continuous trading strategy. Following the linear
Markovian ansatz of~\cite{kyle1985,back1992insider}, we restrict
to strategies of the form
\(
  dx_t \;=\; \beta_t\,(v - p_t)\,dt
\)
for a deterministic intensity $\beta_t$ to be determined in
equilibrium.

A continuum of \emph{noise traders} submits flow
\(
  du_t \;=\; \sigma_u\,dW^u_t
\)
of constant volatility $\sigma_u > 0$, independent of $v$.

The \emph{privacy channel} adds an independent Brownian
perturbation to the market maker's observation of aggregate flow:
\(
  d\varepsilon_t \;=\; \sigma_\varepsilon\,dW^\varepsilon_t,
\)
with $\sigma_\varepsilon \ge 0$ the privacy-noise intensity. Here
$\sigma_\varepsilon$ is a diffusion intensity; in the
single-period companion~\cite{nakamura2026privacysubsidy} the same
symbol is a standard deviation, so the two agree numerically only
under the unit-horizon normalization.
Aggregate real flow and observed flow are, respectively,
\[
  dy_t \;=\; dx_t + du_t,
  \qquad
  d\widetilde y_t \;=\; dy_t + d\varepsilon_t.
\]
Trades clear against the market maker; the privacy noise enters
the MM's information set but not the executed trades.

\subsection{Committed Bayesian-AMM pricing rule}
\label{sec:model-pricing}

The market maker uses a price process of the form
\[
  dp_t \;=\; \lambda_t\,d\widetilde y_t,
\]
where the impact coefficient $\lambda_t$ is fixed ex ante (a
protocol or smart-contract parameter) and chosen to satisfy the
Bayes projection identity
\[
  \lambda_t \;=\; \frac{\mathrm{Cov}(v,\, d\widetilde y_t \mid \mathcal{F}_t^{\mathrm{MM}})}
                       {\mathrm{Var}(d\widetilde y_t \mid \mathcal{F}_t^{\mathrm{MM}})}
\]
where $\mathcal{F}_t^{\mathrm{MM}} = \sigma(\{\widetilde y_s : s \le t\})$
is the MM's observation filtration. Equivalently, $p_t$ is the
posterior mean of $v$ given the noisy flow history.

This differs from the competitive (zero-profit) Kyle MM in one
respect. Both rules use the Bayes Kalman gain
$\lambda_t = \mathrm{Cov}(v, d\widetilde y_t)/\mathrm{Var}(d\widetilde y_t)$
to update the posterior mean; the difference is the equilibrium
constraint placed on top. Classical Kyle imposes the additional
zero-MM-profit condition $\mathbb{E}[\Pi_M] = 0$. Under perfect
observation ($\sigma_\varepsilon = 0$), this is automatic from
the tower property: with $p_t = \mathbb{E}[v \mid y_t]$ and
$y_t \in \mathcal{F}_t^{\mathrm{MM}}$,
\[
  \mathbb{E}[(p_t - v)\,y_t]
  \;=\; \mathbb{E}\!\left[y_t\,\mathbb{E}[v \mid y_t]\right]
        - \mathbb{E}[v\,y_t]
  \;=\; \mathbb{E}[y_t \cdot v] - \mathbb{E}[v\,y_t]
  \;=\; 0.
\]
With privacy noise ($\sigma_\varepsilon > 0$), the MM's
observation $\widetilde y_t$ is strictly coarser than the real
flow $y_t$, so $y_t$ is no longer
$\mathcal{F}_t^{\mathrm{MM}}$-measurable and the tower property
collapse fails. Bayesian pricing alone no longer pins MM profit
to zero. A competitive maker restricted to the same coarse filtration
$\mathcal{F}_t^{\mathrm{MM}}$ does not escape this loss: zero
profit \emph{conditional on its information} reproduces the Bayes
price $\mathbb{E}[v \mid \mathcal{F}_t^{\mathrm{MM}}]$ and still
cedes $-\lambda_t\sigma_\varepsilon^2$ against the real flow. The
loss vanishes only under zero profit \emph{unconditionally
against} the real flow $y_t$, which forces a rule that over-reacts
to $\widetilde y_t$ (its quote is not the posterior mean) and is
not implementable by a maker observing only $\widetilde y_t$.
Informational efficiency and zero profit against the real flow are
thus incompatible once $\sigma_\varepsilon > 0$; we leave
$\lambda_t$ at its Bayes Kalman value and take the resulting
non-zero MM profit as the central object of analysis. The closest classical analog
is the monopolist specialist of Glosten~\cite{glosten1989}, which
also features a non-zero-profit MM, although under a different
mechanism (monopoly rents rather than commitment to a published
pricing rule). A closer continuous-time precedent is Aase and
{\O}ksendal~\cite{nonfiduciary-mm}, who relax the Kyle--Back
zero-profit constraint and obtain a closed-form non-zero maker
profit; there the maker prices on the \emph{exact} order flow and
books a \emph{positive} profit through a committed fee/margin,
whereas our maker prices on a noise-coarsened flow and books a
\emph{negative} profit (the subsidy). The novelty here is the
source (information coarsening) and the sign of the non-zero
profit, not the relaxation of zero-profit itself. The
committed-pricing interpretation is the
natural model for DeFi exchanges where $\lambda_t$ is specified
in smart-contract code and the protocol cannot dynamically
rebalance to satisfy a zero-profit constraint.

We write $\Sigma(t) := \mathrm{Var}(v - p_t \mid \mathcal{F}_t^{\mathrm{MM}})$
for the posterior variance, with initial value
$\Sigma(0) = \Sigma_0 = \sigma_v^2$.

\subsection{Trade-clearing convention}
\label{sec:model-clearing}

Trades clear at the post-update price
$p_t = p_{t^-} + \lambda_t\,d\widetilde y_t$. This is the standard
continuous-time Kyle convention~\cite{back1992insider}; under the
alternative pre-update convention $p_t = p_{t^-}$, the
$\sigma_\varepsilon = 0$ limit does not recover the classical
zero-MM-profit benchmark, so the convention is load-bearing.

%-------------------------------------------------------------
\section{Equilibrium}
\label{sec:equilibrium}

\begin{theorem}[Markovian Linear Equilibrium]
\label{thm:equilibrium}
\begin{sloppypar}
Restrict to insider strategies of the linear Markovian form
$dx_t = \beta_t(v - p_t)\,dt$ and impose the standard Kyle--Back
rationality condition $\Sigma(1) = 0$ (full revelation at the
horizon; see~Step~5 of the proof). Under the committed
Bayesian-AMM pricing rule of~\Cref{sec:model} with privacy-noise
diffusion intensity $\sigma_\varepsilon \ge 0$, the unique
equilibrium in this class is given by
\end{sloppypar}
\begin{align*}
  \lambda(t) &= \frac{\sigma_v}{\sqrt{\sigma_u^2 + \sigma_\varepsilon^2}}, \\
  \beta(t)   &= \frac{\sigma_v \sqrt{\sigma_u^2 + \sigma_\varepsilon^2}}{\Sigma(t)}
              = \frac{\sqrt{\sigma_u^2 + \sigma_\varepsilon^2}}{\sigma_v\,(1-t)}, \\
  \Sigma(t)  &= \sigma_v^2\,(1 - t),
\end{align*}
with $\lambda(t)$ constant in~$t$.
\end{theorem}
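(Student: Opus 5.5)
The plan is to split the equilibrium into its two standard halves and then close the system with the rationality condition. The first half is the market maker's filtering problem given $\beta$. The second half is the insider's optimization given $\lambda$. Throughout, write $s^2 := \sigma_u^2 + \sigma_\varepsilon^2$.

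\emph{Filtering.} From the MM's viewpoint, the observed flow is
\[
  d\widetilde y_t = \beta_t (v - p_t)\,dt + \sigma_u\,dW^u_t + \sigma_\varepsilon\,dW^\varepsilon_t .
\]
The two Brownian terms are independent, so they merge into a single Brownian motion of intensity $s$. This is a linear-Gaussian observation of the constant state $v$, so the Kalman--Bucy filter applies directly. Its gain is $\lambda_t = \beta_t \Sigma(t)/s^2$, which is exactly the Bayes projection identity of \Cref{sec:model-pricing}. The variance obeys the Riccati equation $\Sigma'(t) = -\beta_t^2 \Sigma(t)^2/s^2 = -\lambda_t^2 s^2$. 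The key observation is that privacy noise enters only through the replacement $\sigma_u^2 \mapsto s^2$ in both formulas.

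\emph{Insider optimality.} Fix a deterministic $\lambda_t$. The insider chooses a trading rate $\theta_t$, and by the post-update convention of \Cref{sec:model-clearing} his cumulative profit is $\int_0^1 (v - p_t)\,\theta_t\,dt$. From his viewpoint the price moves as
\[
  dp_t = \lambda_t \theta_t\,dt + \lambda_t s\,dB_t
\]
for a Brownian motion $B$. The HJB equation for his value $J(t,p)$ then reads
\[
  J_t + \tfrac12 \lambda_t^2 s^2 J_{pp} + \sup_\theta \theta\bigl[(v-p) + \lambda_t J_p\bigr] = 0 .
\]
The problem is linear in $\theta$, so a finite value with an interior (Markov) strategy forces the bracket to vanish identically. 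This gives $J_p = -(v-p)/\lambda_t$, i.e.\ $J = (v-p)^2/(2\lambda_t) + g(t)$. Substituting back yields $-\lambda_t'(v-p)^2/(2\lambda_t^2) + g'(t) + \tfrac12 \lambda_t s^2 = 0$ for all $p$. This forces $\lambda_t' = 0$, so $\lambda$ is constant, and $g' = -\lambda s^2/2$. Any $\beta$ is then optimal, provided that $p_1 = v$ almost surely, so that no residual $(v-p_1)^2$ value is left on the table. That proviso is where the condition $\Sigma(1) = 0$ enters.

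\emph{Closing the system.} With $\lambda$ constant, the Riccati equation integrates to $\Sigma(t) = \sigma_v^2 - \lambda^2 s^2 t$. Imposing $\Sigma(1) = 0$ gives $\lambda = \sigma_v/s$ and $\Sigma(t) = \sigma_v^2(1-t)$. Inverting the gain formula gives
\[
  \beta_t = \frac{\lambda s^2}{\Sigma(t)} = \frac{\sigma_v s}{\Sigma(t)} = \frac{s}{\sigma_v(1-t)} .
\]
Uniqueness within the class follows because each step was forced: $\lambda$ must be constant, $\Sigma$ is then linear, and the terminal condition pins its slope.

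\emph{Main obstacle.} I expect the delicate step to be the insider-optimality verification. First, one must show that the candidate strategy, with $\beta_t$ blowing up as $t \to 1$, actually achieves $p_1 = v$ pathwise or in $L^2$ under the insider's measure. I would verify this via the Brownian-bridge structure of $v - p_t$: it is Gaussian with variance of order $(1-t)$. Second, one must justify that the HJB verification argument survives this terminal singularity. I would do this by working on $[0, 1-\delta]$ and letting $\delta \to 0$ using the explicit quadratic value function. A further point to check is that the post-update clearing convention only shifts $g$ by a quadratic-covariation term and leaves $\lambda' = 0$ intact.
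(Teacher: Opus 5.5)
Your proposal is correct and follows essentially the paper's route. The HJB indifference condition forces $J_p = -(v-p)/\lambda_t$ and hence a constant $\lambda$. The Kalman--Bucy gain and Riccati equation, with $\sigma_u^2 \mapsto \sigma_u^2+\sigma_\varepsilon^2$, then make $\Sigma$ linear, and $\Sigma(1)=0$ pins $\lambda = \sigma_v/\sqrt{\sigma_u^2+\sigma_\varepsilon^2}$ and then $\beta$.

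The one difference is how you justify the terminal condition. You use Back's verification argument: the residual $(v-p_1)^2/(2\lambda)$ is forfeited unless $p_1=v$. The paper instead uses its Step~5 information-budget maximisation over $c$. Yours is if anything the cleaner justification, since $c=\lambda(\sigma_u^2+\sigma_\varepsilon^2)$ is tied to the maker's committed parameter rather than chosen by the insider.
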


\begin{proof}[Proof sketch]
We give the standard Kyle--Back HJB derivation with the
effective-noise substitution $\sigma_u^2 \mapsto \sigma_u^2 +
\sigma_\varepsilon^2$; details mirror~\cite{back1992insider}.

\emph{Step~1 (Insider HJB).} Conjecture the insider's value
function as
\(
  J(v, p, t) \;=\; \alpha(t)\,(v-p)^2 \;+\; \gamma(t).
\)
Conditional on the insider's trading rate $\theta_t$, the price
dynamics are
\(
  dp_t \;=\; \lambda_t\,\theta_t\,dt
         + \lambda_t\,\sqrt{\sigma_u^2+\sigma_\varepsilon^2}\,d\widetilde W_t,
\)
where $\widetilde W$ is the standard Brownian combining the
noise-trader and privacy-noise innovations. The HJB is
\[
  0 \;=\; \partial_t J
        + \sup_{\theta}\bigl\{(v-p)\,\theta + \lambda_t\,\theta\,\partial_p J\bigr\}
        + \tfrac12\,\lambda_t^2\,(\sigma_u^2+\sigma_\varepsilon^2)\,\partial_p^2 J.
\]

\emph{Step~2 (FOC + ansatz match).} Interior optimality requires
$(v-p) + \lambda_t\,\partial_p J = 0$, giving
$\partial_p J = -(v-p)/\lambda_t$. Comparing with
$\partial_p J = -2\alpha(t)(v-p)$ from the ansatz yields
$\alpha(t)\,\lambda_t = 1/2$. Substituting back and matching the
$(v-p)^2$ coefficient on the right-hand side of the HJB forces
$\alpha'(t) = 0$; hence both $\alpha$ and $\lambda$ are constant
in~$t$. The constant-term coefficient then yields
$\gamma'(t) = -\lambda^2(\sigma_u^2+\sigma_\varepsilon^2)\,\alpha
= -\,c/2$ (using $\alpha\lambda = 1/2$ and $\lambda(\sigma_u^2+\sigma_\varepsilon^2) = c$ from Step~3 below),
so $\gamma(t) = (c/2)(1-t)$ along the equilibrium path with
$\gamma(1) = 0$.

\emph{Step~3 (Bayes pins price impact).} The committed
Bayesian-AMM Kalman gain is
\(
  \lambda_t = \beta_t\,\Sigma(t)/(\sigma_u^2+\sigma_\varepsilon^2),
\)
so $\beta_t\,\Sigma(t)$ is constant; denote the constant by~$c$.

\emph{Step~4 (Riccati for $\Sigma$).} The Bayesian posterior
variance evolves as
\(
  d\Sigma/dt = -\,c^2/(\sigma_u^2+\sigma_\varepsilon^2),
\)
giving $\Sigma(t) = \Sigma_0 - c^2 t/(\sigma_u^2+\sigma_\varepsilon^2)$.

\emph{Step~5 (Rationality binds $\Sigma(1) = 0$ and pins $c$.)}
We close the system without invoking a pointwise transversality
on $J$, which is incompatible with the constant-$\alpha$ ansatz.
The insider's expected cumulative profit over $[0,1]$, evaluated
along the linear Markovian strategy, equals
$\int_0^1 \beta_t \Sigma(t)\,dt = c$ (using $\beta_t \Sigma(t) = c$
from Step~3). The insider chooses the trading intensity $c$ to
maximise this profit subject to the variance non-negativity
constraint $\Sigma(t) \ge 0$ for all $t \in [0,1]$. Since Step~4
gives $\Sigma(t) = \Sigma_0 - c^2 t / (\sigma_u^2 +
\sigma_\varepsilon^2)$, monotonically decreasing in $t$, the
constraint first binds at $t = 1$:
$\Sigma(1) \ge 0 \iff c^2 \le \Sigma_0\,(\sigma_u^2 + \sigma_\varepsilon^2)$.
The maximiser saturates the budget at equality,
\(
  c^2 = \Sigma_0\,(\sigma_u^2 + \sigma_\varepsilon^2)
      = \sigma_v^2\,(\sigma_u^2 + \sigma_\varepsilon^2),
\)
hence $c = \sigma_v\sqrt{\sigma_u^2 + \sigma_\varepsilon^2}$ and
$\Sigma(1) = 0$ in equilibrium. This is the standard Kyle--Back
information-budget argument: the insider trades exactly enough to
reveal the private information by the horizon and no less. It is
equivalent to the Back~\cite{back1992insider} transversality
condition $J(v,p,1) = 0$ evaluated along the equilibrium path
($p_1 = v$ a.s.\ when $\Sigma(1) = 0$); the rationality framing
avoids the inconsistency that a pointwise $J(v,p,1) = 0$ would
otherwise create with constant $\alpha > 0$.

\emph{Step~6 (Recover $\lambda, \beta, \Sigma$).} Substituting:
$\lambda = c/(\sigma_u^2+\sigma_\varepsilon^2) =
\sigma_v/\sqrt{\sigma_u^2+\sigma_\varepsilon^2}$;
$\beta(t) = c/\Sigma(t)$; and $\Sigma(t) = \sigma_v^2(1-t)$.
\end{proof}

%-------------------------------------------------------------
\section{Cumulative Privacy Subsidy}
\label{sec:subsidy}

\begin{theorem}[Cumulative Privacy Subsidy]
\label{thm:subsidy}
\begin{sloppypar}
Let $dy_t = dx_t + du_t$ denote the real (unobserved) aggregate
flow and assume trades clear at the post-update price
$p_t = p_{t^-} + \lambda_t\,d\widetilde y_t$. The committed
Bayesian-AMM's expected cumulative profit over~$[0,1]$ is
\end{sloppypar}
\begin{align*}
  \Pi_M
  &\;=\; \mathbb{E}\!\left[\int_0^1 (p_t - v)\,dy_t \right]
  \;=\; -\,\frac{\sigma_v\,\sigma_\varepsilon^2}{\sqrt{\sigma_u^2 + \sigma_\varepsilon^2}}.
\end{align*}
Equivalently, the \emph{privacy subsidy} -- the absolute transfer
from the protocol's liquidity pool to traders -- is
\[
  \boxed{\;
  |\Pi_M| \;=\; \frac{\sigma_v \, \sigma_\varepsilon^2}
                     {\sqrt{\sigma_u^2 + \sigma_\varepsilon^2}}.
  \;}
\]
Setting $\sigma_\varepsilon = 0$ recovers the classical
zero-MM-profit result of~\cite{kyle1985,back1992insider}.
\end{theorem}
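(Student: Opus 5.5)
The plan is to compute $\Pi_M$ directly from the equilibrium of \Cref{thm:equilibrium}. I will split the post-update execution price into a pre-update part and an Itô correction, and evaluate each expectation separately.

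\emph{Step 1 (decompose the execution price).} Under the post-update convention of \Cref{sec:model-clearing}, write
\[
  (p_t - v)\,dy_t \;=\; (p_{t^-} - v)\,dy_t \;+\; \lambda\,d\widetilde y_t\,dy_t .
\]
The second term is interpreted as the quadratic covariation $\lambda\,d\langle \widetilde y, y\rangle_t$. Since $dx_t$ is of finite variation and $W^\varepsilon$ is independent of $W^u$, we have $d\langle \widetilde y, y\rangle_t = d\langle u,u\rangle_t = \sigma_u^2\,dt$. So the correction term contributes $\lambda\sigma_u^2$ over $[0,1]$. The point to stress is that the privacy noise enters the price update but has zero covariation with the executed flow. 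This is exactly where the asymmetry with the $\sigma_\varepsilon=0$ case arises.

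\emph{Step 2 (pre-update term).} Substitute $dy_t = \beta_t(v-p_t)\,dt + \sigma_u\,dW^u_t$.

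The $dW^u$ part, $\int_0^1 (p_{t^-}-v)\sigma_u\,dW^u_t$, is a martingale with zero mean. It is square-integrable because $\mathbb{E}[(p_t-v)^2] = \Sigma(t) \le \sigma_v^2$.

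The drift part gives $-\int_0^1 \beta_t\,\mathbb{E}[(v-p_t)^2]\,dt$. Here I use that for the Gaussian linear filter the conditional variance $\Sigma(t)$ is deterministic, so $\mathbb{E}[(v-p_t)^2] = \Sigma(t)$. By Step~3 of the proof of \Cref{thm:equilibrium}, $\beta_t\Sigma(t) = c = \sigma_v\sqrt{\sigma_u^2+\sigma_\varepsilon^2}$. The drift contribution is therefore $-c$.

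\emph{Step 3 (assemble).} Using $\lambda(\sigma_u^2+\sigma_\varepsilon^2) = c$, we get
\[
  \Pi_M \;=\; -c + \lambda\sigma_u^2 \;=\; -\lambda\sigma_\varepsilon^2 \;=\; -\frac{\sigma_v\sigma_\varepsilon^2}{\sqrt{\sigma_u^2+\sigma_\varepsilon^2}} .
\]
This is negative, and its absolute value is the boxed formula. Setting $\sigma_\varepsilon=0$ gives $c=\lambda\sigma_u^2$ and hence $\Pi_M=0$, which matches the tower-property argument of \Cref{sec:model-pricing}. I will also note that the result is consistent with the instantaneous loss $-\lambda_t\sigma_\varepsilon^2$ claimed there, integrated over a unit horizon.

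\emph{Main obstacle.} The delicate part is rigor near $t=1$, where $\beta_t$ blows up like $(1-t)^{-1}$. I need to justify exchanging expectation and integral for the drift term, and to check that the stochastic integral is a true martingale rather than only a local one. My first approach is to work on $[0,T]$ for $T<1$, where everything is bounded. On that interval the drift integrand $\beta_t\Sigma(t)=c$ is constant, so the expectation is $-cT + \lambda\sigma_u^2 T$ exactly. I then let $T\uparrow 1$ using monotone convergence for the drift part and the $L^2$ bound $\Sigma(t)\le\sigma_v^2$ for the martingale part.

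A secondary point is to state explicitly that the covariation term is what the post-update convention means. Under pre-update clearing this term would be absent, and $\Pi_M$ would equal $-c \neq 0$ even at $\sigma_\varepsilon = 0$, as the paper remarks.
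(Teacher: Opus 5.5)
Your proposal is correct and is essentially the paper's argument: your drift term $-c$ and covariation term $+\lambda\sigma_u^2$ are exactly $-\Pi_I$ and $-\Pi_N$ from the paper's welfare accounting. Computing $\Pi_M$ directly is the ``direct It\^o computation'' the paper mentions as a cross-check, and it is the same calculation without going through the zero-sum identity. Your integrability discussion near $t=1$ is a useful addition the paper omits, though truncation is unnecessary: Tonelli on the nonnegative drift integrand and the bound $\Sigma(t)\le\sigma_v^2$ already handle all of $[0,1]$.
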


\begin{proof}[Proof sketch via welfare accounting]
We decompose total expected welfare into the three classes of
participants and compute each.

\emph{Insider expected profit.} Under the equilibrium strategy
$dx_t = \beta_t(v-p_t)\,dt$, the profit rate is
$(v-p_t)\,dx_t = \beta_t\,(v-p_t)^2\,dt$. Bayesian pricing
$p_t = \mathbb{E}[v \mid \mathcal{F}_t^{\mathrm{MM}}]$
(\Cref{sec:model-pricing}) gives
$\mathbb{E}[v - p_t \mid \mathcal{F}_t^{\mathrm{MM}}] = 0$, hence
by the tower property and the determinism of
$\Sigma(t) = \sigma_v^2(1-t)$,
$\mathbb{E}[(v-p_t)^2] = \mathbb{E}[\Sigma(t)] = \Sigma(t)$. The
unconditional expected profit rate is therefore
$\beta_t\,\Sigma(t)\,dt = c\,dt$ by~\Cref{thm:equilibrium}, and
\[
  \Pi_I \;=\; c \;=\; \sigma_v\sqrt{\sigma_u^2 + \sigma_\varepsilon^2}.
\]

\emph{Noise-trader expected profit.} Each noise-trader unit of
flow $du_t$ executes at the post-trade price
$p_t = p_{t^-} + \lambda_t\,d\widetilde y_t$. Decomposing
\(
  (v - p_t)\,du_t
  = (v - p_{t^-})\,du_t - \lambda_t\,d\widetilde y_t \cdot du_t,
\)
the first term has expectation zero (since
$p_{t^-} \in \mathcal{F}_{t^-}$ and $du_t$ is the next innovation,
independent of $\mathcal{F}_{t^-}$). For the Itô cross-term,
\(
  d\widetilde y_t \cdot du_t
  = (dx_t + du_t + d\varepsilon_t)\cdot du_t
  = (du_t)^2
  = \sigma_u^2\,dt,
\)
where $dx_t \cdot du_t = 0$ ($dx_t$ is of order $dt$) and
$d\varepsilon_t \cdot du_t = 0$ (independent Brownian motions).
Hence $\mathbb{E}[(v - p_t)\,du_t] = -\,\lambda_t\,\sigma_u^2\,dt$,
and integrating,
\[
  \Pi_N \;=\; -\,\lambda\,\sigma_u^2
         \;=\; -\,\frac{\sigma_v\,\sigma_u^2}
                       {\sqrt{\sigma_u^2 + \sigma_\varepsilon^2}}.
\]

\emph{MM expected profit (residual).} Every executed trade is
between a participant (insider or noise trader) and the MM, with
no external counterparty; hence
$\Pi_I + \Pi_N + \Pi_M = 0$, and
\[
  \Pi_M \;=\; -(\Pi_I + \Pi_N)
        \;=\; -\sigma_v\sqrt{\sigma_u^2+\sigma_\varepsilon^2}
              + \frac{\sigma_v\sigma_u^2}{\sqrt{\sigma_u^2+\sigma_\varepsilon^2}}
        \;=\; -\,\frac{\sigma_v\,\sigma_\varepsilon^2}
                      {\sqrt{\sigma_u^2 + \sigma_\varepsilon^2}}.
\]
Direct Itô computation of
$\mathbb{E}\!\int_0^1 (p_t - v)\,dy_t$ gives the same value,
confirming the accounting. Equivalently, this is the per-increment
executed-price wedge: since committed pricing gives
$\mathbb{E}[v - p_t] = 0$, the instantaneous covariance on the
\emph{plain} settled increment is
$\mathrm{Cov}(v - p_t,\, dy_t) = \lambda\,\sigma_\varepsilon^2\,dt$,
integrating to
$-\lambda\,\sigma_\varepsilon^2 = -\sigma_v\sigma_\varepsilon^2/
\sqrt{\sigma_u^2 + \sigma_\varepsilon^2}$ (the residual-projection
form $\mathrm{Cov}(v - p_t,\, dy_t - \mathbb{E}[dy_t \mid
d\widetilde y_t])$ vanishes and is not used).
\end{proof}

\begin{remark}[Consistency with the Kyle--Back doubling]
\label{rem:kb-doubling}
Paper~A~\cite[Theorem~2]{nakamura2026privacysubsidy} establishes
the single-period privacy subsidy
\(
  |\pi_M^{(1)}| = \sigma_v\,\sigma_\varepsilon^2 /
                  (2\sqrt{\sigma_u^2 + \sigma_\varepsilon^2}).
\)
\Cref{thm:subsidy} gives $|\Pi_M| = 2\,|\pi_M^{(1)}|$. The
factor of two is the standard Kyle--Back welfare scaling from
single-shot to continuous-time auctions: classical Kyle without
privacy noise already exhibits the same ratio (single-period
informed-trader profit $\sigma_v\sigma_u/2$ becomes
$\sigma_v\sigma_u$ in continuous time~\cite{back1992insider}),
and the present subsidy inherits it.
\end{remark}

\begin{remark}[Generic horizon $T$]
\label{rem:generic-T}
We normalise the trading horizon to $[0,1]$ in line with
\cite{kyle1985,back1992insider}. For a generic horizon $[0,T]$,
the same derivation with terminal condition $\Sigma(T) = 0$
gives $c = \sigma_v\sqrt{(\sigma_u^2+\sigma_\varepsilon^2)/T}$,
$\lambda = \sigma_v / \sqrt{T(\sigma_u^2 + \sigma_\varepsilon^2)}$,
and the cumulative subsidy
$|\Pi_M| = \sqrt{T}\,\sigma_v\,\sigma_\varepsilon^2 /
\sqrt{\sigma_u^2 + \sigma_\varepsilon^2}$. The $\sqrt{T}$
scaling reflects the diffusion time-scale of the Brownian
channels: doubling the horizon multiplies cumulative welfare
flows by~$\sqrt{2}$, not by~$2$. The normalisation $T = 1$ used
throughout absorbs this factor and is the standard convention.
\end{remark}

\begin{remark}[Why the post-trade clearing convention is load-bearing]
\Cref{thm:subsidy} relies on trades clearing at the post-update
price $p_t = p_{t^-} + \lambda_t\,d\widetilde y_t$. Under the
alternative convention $p_t = p_{t^-}$ (pre-trade clearing), the
Itô correction $\lambda_t\sigma_u^2\,dt$ vanishes and the MM
absorbs a non-zero loss even when $\sigma_\varepsilon = 0$,
contradicting classical Kyle--Back. Post-trade clearing is the
right convention.
\end{remark}

\begin{corollary}[Distribution of the subsidy across traders]
\label{cor:decomposition}
The privacy subsidy decomposes naturally into insider and
noise-trader incremental gains relative to the
$\sigma_\varepsilon = 0$ benchmark:
\begin{align*}
  \Delta\Pi_I &\;:=\; \Pi_I(\sigma_\varepsilon) - \Pi_I(0)
              \;=\; \sigma_v\!\left[\sqrt{\sigma_u^2 + \sigma_\varepsilon^2} - \sigma_u\right], \\
  \Delta\Pi_N &\;:=\; \Pi_N(\sigma_\varepsilon) - \Pi_N(0)
              \;=\; \sigma_v\,\sigma_u\!\left[\frac{\sqrt{\sigma_u^2 + \sigma_\varepsilon^2} - \sigma_u}
                                                   {\sqrt{\sigma_u^2 + \sigma_\varepsilon^2}}\right],
\end{align*}
with $\Delta\Pi_I + \Delta\Pi_N = |\Pi_M|$. The insider-to-noise
share ratio is
\[
  \Delta\Pi_I \,:\, \Delta\Pi_N
  \;=\; \sqrt{\sigma_u^2 + \sigma_\varepsilon^2} \,:\, \sigma_u,
\]
so the insider captures a strictly larger share than the noise
traders for any $\sigma_\varepsilon > 0$. Both sides of the ratio
have the dimension of standard deviation; the asymmetry is
conceptual rather than dimensional. The informed trader's
incremental gain scales with the standard deviation of the
\emph{total} flow noise (since adding privacy noise widens the
effective camouflage available to the insider), whereas the
noise traders' incremental gain scales only with the standard
deviation of their \emph{own} contribution. The privacy mechanism
therefore subsidises the insider disproportionately, with the
ratio approaching $1:1$ as $\sigma_\varepsilon \to 0$ and
diverging as $\sigma_\varepsilon \to \infty$.
\end{corollary}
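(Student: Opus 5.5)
The corollary follows from the three welfare components computed in the proof of \Cref{thm:subsidy}. I will write $s := \sqrt{\sigma_u^2+\sigma_\varepsilon^2}$ throughout. First I will record the closed forms as functions of $\sigma_\varepsilon$:
\[
  \Pi_I(\sigma_\varepsilon) = \sigma_v s,
  \qquad
  \Pi_N(\sigma_\varepsilon) = -\sigma_v\sigma_u^2/s.
\]
Both are valid for every $\sigma_\varepsilon \ge 0$, because \Cref{thm:equilibrium} holds for all $\sigma_\varepsilon \ge 0$. At $\sigma_\varepsilon = 0$ we have $s = \sigma_u$, so $\Pi_I(0) = \sigma_v\sigma_u$ and $\Pi_N(0) = -\sigma_v\sigma_u$. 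These are the classical Back benchmarks, and consistently $\Pi_M(0)=0$.

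Next I will compute the two differences. The insider's increment is immediate:
\[
  \Delta\Pi_I = \sigma_v(s-\sigma_u).
\]
For the noise traders I put both terms over the common denominator $s$:
\[
  \Delta\Pi_N = \sigma_v\sigma_u - \frac{\sigma_v\sigma_u^2}{s} = \frac{\sigma_v\sigma_u(s-\sigma_u)}{s},
\]
which is the stated form. Adding the two gives
\[
  \Delta\Pi_I+\Delta\Pi_N = \frac{\sigma_v(s-\sigma_u)(s+\sigma_u)}{s} = \frac{\sigma_v(s^2-\sigma_u^2)}{s} = \frac{\sigma_v\sigma_\varepsilon^2}{s},
\]
and this is $|\Pi_M|$ by \Cref{thm:subsidy}. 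There is also a structural reading of this identity. Since $\Pi_I+\Pi_N+\Pi_M=0$ for every $\sigma_\varepsilon$ and $\Pi_M(0)=0$, the incremental trader gains must sum to $-\Pi_M(\sigma_\varepsilon)=|\Pi_M|$. I will note this as a check on the algebra.

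For the ratio, I divide the two increments. For $\sigma_\varepsilon>0$ the common factor $\sigma_v(s-\sigma_u)$ is strictly positive, so it cancels. This leaves
\[
  \Delta\Pi_I : \Delta\Pi_N = 1 : \sigma_u/s = s : \sigma_u.
\]
Strict dominance of the insider then follows from $s>\sigma_u$ whenever $\sigma_\varepsilon>0$.

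The limiting behaviour needs a short comment because the ratio is $0/0$ at $\sigma_\varepsilon=0$. I will interpret it as the limit of the ratio $s/\sigma_u$, which tends to $1$ as $\sigma_\varepsilon\to0$ and diverges linearly in $\sigma_\varepsilon$ as $\sigma_\varepsilon\to\infty$. The interpretive sentences in the statement about dimensions and camouflage are commentary and need no proof.

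The only point needing care is not computational. It is making sure the comparison $\Pi_\cdot(\sigma_\varepsilon)-\Pi_\cdot(0)$ uses the equilibrium of \Cref{thm:equilibrium} re-solved at each noise level, so that $\lambda$ itself changes with $\sigma_\varepsilon$. The $\sigma_\varepsilon=0$ values must be the classical ones; they must not come from holding $\lambda$ fixed. I will state this explicitly.
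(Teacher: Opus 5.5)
Your proposal is correct and follows essentially the same route as the paper. The paper substitutes $\sigma_\varepsilon = 0$ into the welfare-accounting expressions $\Pi_I = \sigma_v\sqrt{\sigma_u^2+\sigma_\varepsilon^2}$ and $\Pi_N = -\sigma_v\sigma_u^2/\sqrt{\sigma_u^2+\sigma_\varepsilon^2}$ from the proof of \Cref{thm:subsidy}, subtracts, and sums by direct algebra; your additional cross-check via $\Pi_I+\Pi_N+\Pi_M=0$ with $\Pi_M(0)=0$, and your treatment of the $\sigma_\varepsilon \to 0$ ratio as the limit of $\sqrt{\sigma_u^2+\sigma_\varepsilon^2}/\sigma_u$ to avoid the $0/0$ form, are sound refinements of that same argument.
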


\begin{proof}
Substitute $\sigma_\varepsilon = 0$ into the welfare-accounting
expressions in the proof of~\Cref{thm:subsidy} and subtract from
the general-$\sigma_\varepsilon$ values. Sum reduces by direct
algebra to $|\Pi_M|$.
\end{proof}

\begin{remark}[The privacy ``gain'' is gross-of-fees;
\Cref{cor:decomposition} is welfare-neutral net-of-fees]
\label{rem:welfare-neutral}
\Cref{cor:decomposition} is a \emph{gross-of-fees} decomposition
of the no-fee equilibrium. The identity $\Delta\Pi_I + \Delta\Pi_N = |\Pi_M|$
states that the privacy subsidy is exactly the redistribution
required to recoup each trader's incremental gain. In any finite
deployment of the type considered in~\Cref{sec:app-shielded}
($N$ blocks of length $1/N$), the per-block volume-proportional
break-even fee charged at rate $f = |\Pi_M|/Q$ -- where
$Q$ is expected total volume in the deployment -- charges the
insider $\Delta\Pi_I$ and the noise traders $\Delta\Pi_N$ in
aggregate, exactly cancelling each side's incremental gain over
the $\sigma_\varepsilon = 0$ benchmark. Net-of-fees, the insider
net profit reverts to the classical Kyle--Back value
$\sigma_v\,\sigma_u$, the noise traders' net loss reverts to
$-\sigma_v\,\sigma_u$, and the MM is exactly compensated.
Privacy is therefore \emph{exactly welfare-neutral} net-of-fees,
at the partial-equilibrium level (no-fee equilibrium trading
intensities, fee revenue redistributed to the LP pool). The
continuous-time limit inherits the identity, although a literal
volume-proportional fee in the limit is ill-defined because the
total variation of the Brownian noise flow is infinite; the
discrete-deployment statement is the operationally meaningful
form. A full fee-equilibrium analysis -- in which fees distort
the linear-strategy structure of~\Cref{thm:equilibrium} -- is
left for future work.
\end{remark}

%-------------------------------------------------------------
\section{The LVR Bridge}
\label{sec:lvr}

\subsection{Structural correspondence}

The constant-product AMM analysis of Milionis et al.~\cite{milionis2022lvr}
derives a per-unit-time welfare loss called
\emph{Loss-Versus-Rebalancing}. With reserves $(R^x_t, R^y_t)$
satisfying $R^x_t R^y_t = k$, AMM-portfolio value
$V_{\mathrm{AMM}}(q) = 2\sqrt{kq}$ as a function of the external
reference price $q_t$ (distinct from our Kyle price $p_t$), and
reference-price diffusion intensity $\sigma$, the LVR rate is
\[
  \ell^{\mathrm{LVR}}(t)
  \;=\; -\tfrac12\,\sigma^2\,q_t^2\,V_{\mathrm{AMM}}''(q_t)
  \;=\; \tfrac{\sigma^2}{8}\,V_{\mathrm{AMM}}(q_t).
\]
The two expressions are positive because
$V_{\mathrm{AMM}}''(q) = -\sqrt{k}/(2 q^{3/2}) < 0$, reflecting
the concavity of the constant-product AMM curve.
Cumulatively, LVR over the trading horizon is
$\int_0^1 \ell^{\mathrm{LVR}}(t)\,dt$. The economic content is
that LVR quantifies the welfare the AMM \emph{cedes} by quoting
along its committed curve while the reference price moves
exogenously: the AMM is forced to provide liquidity to
arbitrageurs at off-equilibrium quotes.

The privacy subsidy~$|\Pi_M|$ of~\Cref{thm:subsidy} has the same
information-economic structure with the price channel replaced
by the order-flow channel. Under the equilibrium
of~\Cref{thm:equilibrium}, the privacy subsidy has constant
instantaneous rate
\[
  \ell^{\mathrm{priv}}(t)
  \;\equiv\;
  \frac{\sigma_v\,\sigma_\varepsilon^2}{\sqrt{\sigma_u^2 + \sigma_\varepsilon^2}}.
\]

\Cref{tab:duality} lists the corresponding objects in each
framework.

\begin{table}[h]
\centering
\caption{Structural correspondence between LVR and the privacy subsidy.}
\label{tab:duality}
\begin{tabular}{lll}
\hline
\textbf{Concept} & \textbf{LVR}~\cite{milionis2022lvr} & \textbf{Privacy subsidy (this paper)} \\
\hline
Committed object & AMM curve $V_{\mathrm{AMM}}(\cdot)$ & Pricing rule $\lambda_t$ \\
Observation channel & External price $q_t$ & Noisy order flow $d\widetilde y_t$ \\
Noise driver        & Reference-price BM   & Privacy-noise BM $W^\varepsilon$ \\
Counterparty        & Arbitrageur          & Informed insider \\
Welfare rate        & $\tfrac{\sigma^2}{8} V_{\mathrm{AMM}}(q_t)$ & $\tfrac{\sigma_v\,\sigma_\varepsilon^2}{\sqrt{\sigma_u^2 + \sigma_\varepsilon^2}}$ \\
Solvency criterion  & $\int\!\mathrm{fee}\;\ge\;\int\!\ell^{\mathrm{LVR}}$ & $\int\!\mathrm{fee}\;\ge\;\int\!\ell^{\mathrm{priv}}$ \\
\hline
\end{tabular}
\end{table}

\noindent The correspondence in~\Cref{tab:duality} is structural:
the two frameworks share the same form of solvency criterion and
the same factorization shape (\emph{noise driver}$^2$ times a
\emph{committed-object factor}), but the two welfare rates live
in different markets and are not directly comparable in absolute
units. The correspondence should be read as an organizing principle for
fee design under committed pricing, not as a numerical identity.

\subsection{Structural-correspondence proposition}

We state the correspondence as a proposition rather than a
theorem because the two welfare quantities live in different
markets (a CFMM with external arbitrage versus a Kyle order book).
The shared structure (a squared noise intensity times a
committed-object factor, under a common solvency criterion) is
precise and central to the break-even-fee application
in~\Cref{sec:applications}.

\begin{proposition}[LVR / privacy-subsidy correspondence]
\label{prop:lvr}
Each of the two welfare rates factorizes into the squared
intensity of the relevant noise driver times a closed-form
function of the committed pricing object:
\begin{align*}
  \ell^{\mathrm{LVR}}(t)
    &\;=\; \sigma^2 \cdot \frac{V_{\mathrm{AMM}}(q_t)}{8}, \\
  \ell^{\mathrm{priv}}
    &\;=\; \sigma_\varepsilon^2 \cdot
           \frac{\sigma_v}{\sqrt{\sigma_u^2 + \sigma_\varepsilon^2}}.
\end{align*}
For LVR the second factor $V_{\mathrm{AMM}}(q_t)/8$ is independent
of $\sigma$; for the privacy subsidy the second factor
$\sigma_v / \sqrt{\sigma_u^2 + \sigma_\varepsilon^2}$ is itself a
function of $\sigma_\varepsilon$. The factorization and the
solvency criterion $\int\!f \ge \int\!\ell$ are \emph{global}.
Only the scaling interpretation \emph{``rate is quadratic in the
noise driver''} is asymptotic: exact for LVR over the full
parameter range, valid for the privacy subsidy in the small-noise
regime $\sigma_\varepsilon \ll \sigma_u$ (where the second factor
$\approx \sigma_v/\sigma_u$ is approximately noise-independent
and $\ell^{\mathrm{priv}} \sim \sigma_v\,\sigma_\varepsilon^2/\sigma_u$),
and degrading to linear in $\sigma_\varepsilon$ in the large-noise
regime $\sigma_\varepsilon \gg \sigma_u$
($\ell^{\mathrm{priv}} \sim \sigma_v\,\sigma_\varepsilon$). The
cumulative welfare in each framework is the time-integral
$W = \int_0^T \ell(t)\,dt$; for the privacy subsidy the rate is
constant in $t$, so $W = \ell^{\mathrm{priv}} \cdot T$, enabling
the closed-form solvency criterion of~\Cref{sec:app-fee} at any
noise level.
\end{proposition}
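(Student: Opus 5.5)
The proof has two independent halves, one per framework, followed by the asymptotic and cumulative claims.

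\emph{LVR half.} Take $V_{\mathrm{AMM}}(q) = 2\sqrt{kq}$ as given. Differentiating twice gives $V_{\mathrm{AMM}}''(q) = -\tfrac12\sqrt{k}\,q^{-3/2}$. Substituting into $\ell^{\mathrm{LVR}} = -\tfrac12\sigma^2 q^2 V_{\mathrm{AMM}}''(q)$ yields $\tfrac14\sigma^2\sqrt{kq}$, and $\sqrt{kq} = V_{\mathrm{AMM}}(q)/2$, so
\[
  \ell^{\mathrm{LVR}} = \sigma^2\,\frac{V_{\mathrm{AMM}}(q_t)}{8}.
\]
The second factor depends only on $k$ and $q_t$, not on $\sigma$. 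Hence the factorization is exact, and the rate is exactly quadratic in $\sigma$ over the whole parameter range.

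\emph{Privacy half.} Invoke \Cref{thm:equilibrium}: $\lambda = \sigma_v/\sqrt{\sigma_u^2+\sigma_\varepsilon^2}$ is constant in $t$. From the proof of \Cref{thm:subsidy}, the instantaneous covariance wedge on the settled increment is $\lambda\sigma_\varepsilon^2\,dt$. This gives $\ell^{\mathrm{priv}} = \sigma_\varepsilon^2\cdot\lambda$, which is the claimed factorization, and the second factor is the committed impact coefficient itself. Because $\lambda$ is time-independent, the rate is constant. Integrating over $[0,T]$ then gives $W = \ell^{\mathrm{priv}}\,T$, which reproduces $|\Pi_M|$ at $T=1$ and serves as a consistency check against \Cref{thm:subsidy}. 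For general $T$ we must be careful, since \Cref{rem:generic-T} says $\lambda$ itself depends on $T$. I would phrase the cumulative claim as ``constant rate times horizon for the equilibrium $\lambda$ at that horizon'', so that it does not conflict with the $\sqrt{T}$ scaling.

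\emph{Regimes.} Write $\lambda = (\sigma_v/\sigma_u)(1+r^2)^{-1/2}$ with $r = \sigma_\varepsilon/\sigma_u$.
\begin{itemize}
\item For $r \to 0$, the expansion $(1+r^2)^{-1/2} = 1 - r^2/2 + O(r^4)$ gives $\ell^{\mathrm{priv}} = \sigma_v\sigma_\varepsilon^2/\sigma_u\,(1+O(r^2))$. So the rate is quadratic to leading order.
\item For $r \to \infty$, the second factor behaves like $\sigma_v/\sigma_\varepsilon$, so $\ell^{\mathrm{priv}} \sim \sigma_v\sigma_\varepsilon$, i.e.\ linear.
\end{itemize}

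\emph{Solvency criterion.} The global claim about $\int f \ge \int \ell$ is definitional. The pool breaks even exactly when cumulative fee revenue covers the cumulative loss, and that loss is $\int\ell$ in both frameworks by the preceding computations. I would state this rather than prove anything further.

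\emph{Main obstacle.} The computations are routine. The delicate step is justifying that $\lambda\sigma_\varepsilon^2\,dt$ is the correct \emph{per-time} rate, rather than merely having the correct integral. I would re-derive it directly. Write $v - p_t = (v-p_{t^-}) - \lambda\,d\widetilde y_t$ and compute $\mathbb{E}[(p_t-v)\,dy_t]$. The insider part gives $-\beta_t\Sigma(t)\,dt + \lambda\,\beta_t\ldots$ terms of order $dt^2$ plus the Itô correction $\lambda\,d\widetilde y_t\,du_t = \lambda\sigma_u^2\,dt$. Combining this with $c = \lambda(\sigma_u^2+\sigma_\varepsilon^2)$ from \Cref{thm:equilibrium} leaves $-\lambda\sigma_\varepsilon^2\,dt$ pointwise in $t$. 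This is what licenses calling it a constant rate.
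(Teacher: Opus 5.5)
Your proposal is correct and follows the same route as the paper. The LVR factor comes from differentiating $V_{\mathrm{AMM}}(q)=2\sqrt{kq}$ twice, and the privacy rate is $\lambda\sigma_\varepsilon^2$, read off the per-increment executed-price wedge in the proof of \Cref{thm:subsidy} and made time-constant by the constant $\lambda$ of \Cref{thm:equilibrium}. Your caveat that $W=\ell^{\mathrm{priv}}\,T$ must use the horizon-$T$ impact $\lambda=\sigma_v/\sqrt{T(\sigma_u^2+\sigma_\varepsilon^2)}$ is a worthwhile sharpening, since inserting the $T=1$ rate would contradict the $\sqrt{T}$ scaling of \Cref{rem:generic-T}.
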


\subsection{Why the bridge matters}

Milionis et al.~\cite{milionis2022lvr} establish LVR as the
foundational welfare quantity for CFMM design: a CFMM is solvent
over a trading horizon only if cumulative fee revenue exceeds
cumulative LVR. The present paper adds the order-flow observation
analog: a privacy-aggregated exchange (shielded AMM, MPC matching
engine, sealed-bid auction with noisy revelation) is solvent only
if cumulative fee revenue exceeds the cumulative privacy subsidy.
LVR addresses mismatched price observation; the privacy subsidy
addresses mismatched flow observation. Both enter the break-even
fee inequality for committed-curve exchanges, one governing price
observation and one flow observation.

%-------------------------------------------------------------
\section{Applications}
\label{sec:applications}

\subsection{Shielded AMM with Gaussian noise injection}
\label{sec:app-shielded}

A shielded AMM that publishes the post-trade pool state but
adds independent Gaussian privacy noise to each block's net
order flow is approximated by our model under a discrete-block
interpretation; as in the single-period
companion~\cite{nakamura2026privacysubsidy}, this is an idealized
normative benchmark, not a literal model of a deployed
constant-function AMM (which prices off its bonding curve, not a
posterior mean over a latent value). The protocol commits to a price-impact
coefficient $\lambda$ in the smart-contract code, traders submit
shielded swaps in $N$ blocks of length $1/N$ each, and the
published flow on each block is the true flow plus an
independent Gaussian privacy increment of variance
$\sigma_\varepsilon^2/N$ (so that the cumulative
privacy-channel variance matches the Brownian intensity
$\sigma_\varepsilon^2$ of our continuous-time model). The
DP-feasibility side of this construction is developed in detail
by Chitra, Angeris, and Evans~\cite{chitra2022dpcfmm}, who
introduce a \emph{Uniform Random Execution} mechanism achieving
$(\epsilon,\delta)$-DP in constant-function market makers and
characterise the achievable privacy parameter in terms of the
AMM curve's curvature and the trade count. Our cumulative-subsidy
result is the welfare-cost complement to their feasibility
analysis: given that a DP layer of intensity $\sigma_\varepsilon$
is realised in the implementation, the LP pool pays the
cumulative subsidy of~\Cref{thm:subsidy} over the trading
horizon.

\paragraph{DP mapping and a continuous-time obstruction.}
\begin{sloppypar}
We write $\epsilon$ for the differential-privacy budget, distinct
from the noise glyph in $\sigma_\varepsilon$. For a single block,
the Gaussian mechanism provides
$(\epsilon_{\mathrm{block}},\delta)$-DP at noise std
$\sigma_{\mathrm{block}}$ via
$\sigma_{\mathrm{block}} = \Delta\sqrt{2\log(1.25/\delta)}/\epsilon_{\mathrm{block}}$,
where $\Delta$ is the unit-trade sensitivity. Setting
$\sigma_{\mathrm{block}} = \sigma_\varepsilon/\sqrt{N}$ gives the
per-block privacy budget
$\epsilon_{\mathrm{block}} \propto \sqrt{N}/\sigma_\varepsilon$.
The joint $(\epsilon, \delta)$ budget over $N$ blocks then
scales as
$N \cdot \epsilon_{\mathrm{block}} = O(N^{3/2}/\sigma_\varepsilon)$
and $N \cdot \delta$ under basic composition, and as
$O(N/\sigma_\varepsilon)$ under advanced (or Rényi-DP)
composition with tighter $\delta$ control. Both regimes diverge
as $N \to \infty$ at fixed $\sigma_\varepsilon$: a continuous-time
Brownian privacy channel of fixed intensity
$\sigma_\varepsilon > 0$ is incompatible with a finite joint DP
guarantee.
\end{sloppypar}

\paragraph{Discrete deployment is the right operational
interpretation.}
The continuous-time model of~\Cref{thm:subsidy} should therefore
be read as the diffusion limit of a finite-$N$ discrete
deployment, not as a model that itself satisfies DP at finite
budget. For any chosen deployment $(N,
\epsilon_{\mathrm{joint}})$, the per-block budget is
$\epsilon_{\mathrm{block}} = \epsilon_{\mathrm{joint}}/N$ (basic)
or $\epsilon_{\mathrm{joint}}/\sqrt{N}$ (advanced), the per-block
noise std follows from the Gaussian mechanism, and the
corresponding $\sigma_\varepsilon^2 = N \sigma_{\mathrm{block}}^2$
substitutes into~\Cref{thm:subsidy} to give the break-even fee.
The continuous-time formula remains valid as the $N \to \infty$
limit at appropriately scaled $\epsilon_{\mathrm{joint}}$ (so
that $\sigma_\varepsilon$ stays bounded as the discretisation
refines), but the DP budget is consumed at the chosen $N$.

\subsection{MPC matching engines}
\label{sec:app-mpc}

In MPC-based matching engines, the protocol observes an order
flow signal that has been intentionally coarsened by the secure
multiparty computation reveal step. If the MPC protocol injects
Gaussian noise of variance $\sigma_\varepsilon^2$ during reveal,
the engine's pricing rule operates on the noisy signal exactly
as in our model. The single-period
analysis~\cite{nakamura2026privacysubsidy} already exhibits the
basic privacy-vs-subsidy trade-off; the continuous-time setting
adds one substantive observation specific to the multi-step
protocol design.

The protocol may attempt to choose $\sigma_\varepsilon(t)$ as a
deterministic \emph{function} of time -- for instance, weaker
privacy near the horizon when most information has been
impounded -- in the hope of reducing the cumulative subsidy at
fixed average privacy budget. Solving~\Cref{thm:equilibrium} with
time-varying $\sigma_\varepsilon(t)$ shows that this attempt
\emph{cannot succeed}. The HJB analysis still forces $\alpha$ and
$\lambda$ to be constants in $t$ (Step~2 of the proof is
unchanged when $\sigma_\varepsilon$ depends on $t$), the Bayes
identity then gives $\beta_t \Sigma(t) = \lambda\,(\sigma_u^2 +
\sigma_\varepsilon(t)^2)$, and the Riccati becomes
\(
  d\Sigma/dt = -\,\lambda^2\,(\sigma_u^2 + \sigma_\varepsilon(t)^2).
\)
Integrating and imposing $\Sigma(1) = 0$ yields
$\lambda^2 = \sigma_v^2 / (\sigma_u^2 + \langle\sigma_\varepsilon^2\rangle)$
where $\langle\sigma_\varepsilon^2\rangle := \int_0^1
\sigma_\varepsilon(t)^2\,dt$ is the time-averaged variance.
Consequently $\ell^{\mathrm{priv}}(t) = \lambda\,\sigma_\varepsilon(t)^2$
and
\[
  |\Pi_M| \;=\; \int_0^1 \lambda\,\sigma_\varepsilon(t)^2\,dt
          \;=\; \lambda\,\langle\sigma_\varepsilon^2\rangle
          \;=\; \frac{\sigma_v\,\langle\sigma_\varepsilon^2\rangle}
                     {\sqrt{\sigma_u^2 + \langle\sigma_\varepsilon^2\rangle}}.
\]

\noindent The cumulative subsidy depends on the privacy-noise
profile only through its time-averaged variance. Front-loading,
back-loading, or any other temporal arrangement that preserves
$\langle\sigma_\varepsilon^2\rangle$ produces the same subsidy
and the same break-even fee: the protocol cannot reduce the
subsidy by scheduling alone.

\subsection{Fee calibration}
\label{sec:app-fee}

\Cref{prop:lvr}'s break-even principle specializes to the
present setting as follows. Let $Q$ denote expected total
volume cleared over $[0,1]$. We assume $Q < \infty$, which
requires the discrete-deployment interpretation
of~\Cref{sec:app-shielded} ($N$ finite blocks); in the
continuous-time Brownian limit, total variation of $du_t$ is
infinite, so $Q$ diverges and a flat volume-proportional fee
must be interpreted block-wise rather than as a Stieltjes
integral. Under the discrete deployment, if the protocol charges
a flat proportional fee $f$ on each unit of volume, total fee
income over $[0,1]$ is $f \cdot Q$. Solvency of the liquidity
pool against the privacy subsidy requires $f \cdot Q \ge |\Pi_M|$,
i.e.,
\[
  f \;\ge\; \frac{\sigma_v\,\sigma_\varepsilon^2}
                 {Q\,\sqrt{\sigma_u^2 + \sigma_\varepsilon^2}}.
\]
This is the privacy analog of the LVR-derived break-even fee. The
companion single-period analysis~\cite{nakamura2026privacysubsidy}
gives half this value; \Cref{rem:kb-doubling} shows the factor of
two between the two is a consequence of the general Kyle--Back
single-period-to-continuous-time scaling, not a privacy-specific
phenomenon. The continuous-time fee is the appropriate
calibration for a continuous-time protocol; the single-period fee
remains correct for a one-shot batched auction.

%-------------------------------------------------------------
\section{Discussion and Future Work}
\label{sec:discussion}

\subsection{Non-Gaussian privacy noise}

The Gaussian privacy channel admits a clean closed-form because
the joint distribution remains in the exponential family and the
Bayesian Kalman gain is linear. For $\epsilon$-differential
privacy with Laplace noise, the analog of~\Cref{thm:equilibrium}
loses linearity and the pricing rule is no longer affine in the
observed flow. We conjecture that the privacy-subsidy rate
qualitatively retains the form
$O(\sigma_\varepsilon^2)$ for small $\sigma_\varepsilon$ but
deviates for large noise; quantitative analysis is left for
future work.

\subsection{Jump processes and FBSDE techniques}

The forward--backward stochastic differential equation framework
of~\cite{cetin2017financial} extends Kyle--Back to non-Gaussian
asset-value processes including jumps. Privacy noise on order
flow combines naturally with this framework, yielding a coupled
FBSDE in which the backward component is the value function of
a partially-observed control problem. We expect
\Cref{thm:subsidy}'s structure -- $|\Pi_M|$ scaling as
$\sigma_\varepsilon^2 / \sqrt{\sigma_u^2 + \sigma_\varepsilon^2}$
in the noise parameters -- to persist qualitatively, with the
closed form replaced by an integral over the equilibrium price
trajectory.

\subsection{Multiple informed traders}

Foster--Viswanathan~\cite{foster1996strategic} extends Kyle to
multiple informed traders forecasting each other's forecasts. The
committed Bayesian-AMM extension is mechanically straightforward
but quantitatively non-trivial: the price impact $\lambda$
depends on the number of insiders and their correlation
structure, and the privacy subsidy decomposes across insiders
according to their relative information contributions. This
extension is most relevant to MEV-bot environments where
multiple competing search agents observe correlated signals.

\subsection{Continuous-time Glosten--Milgrom analog}

Paper~B~\cite{nakamura2026glostenmilgrom} establishes the
single-period privacy subsidy in the discrete-value
Glosten--Milgrom~\cite{glosten1985} model with binary flip-noise
on the direction signal. A continuous-time extension -- the
discrete-value analog of the present paper -- is open and
non-trivial. We sketch the technical landscape.

The natural continuous-time embedding replaces the
single-shot binary trade with a Poisson stream: order arrivals
form a marked Poisson process whose marks are signed trade
directions $\xi \in \{+1, -1\}$. Each arrival is independently
informed (with probability $\mu$) or noise (with probability
$1-\mu$); an informed trader buys when $v = v_H$ and sells when
$v = v_L$, while a noise trader buys or sells uniformly at
random. The marginal probability of a buy given $v = v_H$ is
therefore $\Pr(\xi = +1 \mid v = v_H) = \mu + (1-\mu)/2
= 1/2 + \mu/2$, recovering the discrete-time setup
of~\cite{nakamura2026glostenmilgrom}. The privacy channel is a
binary flip applied independently to each arrival with flip rate
$\eta$. The market maker observes the noisy directional stream
and Bayes-updates its belief
$\pi_t := \Pr(v = v_H \mid \mathcal{F}_t^{\mathrm{MM}})$.

The technical obstacle is that $\pi_t$ is the natural state
variable but its dynamics are non-Gaussian and inherently jump
driven: each arrival induces a discrete Bayes update of $\pi_t$
by a multiplicative likelihood ratio. The closed-form
Kalman-gain reduction of~\Cref{thm:equilibrium} does not apply.
Instead, the posterior follows a piecewise-deterministic
Markov process whose generator combines drift (between
arrivals) with jumps (at arrivals). The equilibrium bid-ask spread is the analog of~$\lambda$ but is
itself a function of the current belief $\pi_t$, so the per-trade
subsidy is in general state-dependent. The cumulative subsidy is
thus the integral $\mathbb{E}[\int_0^1 \mu\eta\Delta(\pi_t)\,d N_t]$,
where $N_t$ is the trade-arrival process and
$\mu\eta\Delta(\pi_t)$ is the per-trade subsidy at the current
belief. In the uninformative-prior approximation $\pi_t \equiv 1/2$
the integrand collapses to the constant
$\mu\eta\Delta$ of~\cite{nakamura2026glostenmilgrom} and the
cumulative subsidy reduces to $\mu\eta\Delta \cdot \Lambda$ for
total expected trade count $\Lambda$; the general state-dependent
case requires the full PDMP analysis and is left for future work.

%-------------------------------------------------------------
\bibliographystyle{splncs04}
\bibliography{references}

\end{document}